\newtheorem{theorem}{Theorem}
\newtheorem{corollary}[theorem]{Corollary}
\newtheorem{lemma}[theorem]{Lemma}
\newtheorem{proposition}[theorem]{Proposition}
\newenvironment{proof}[1][Proof]{\textbf{#1.} }{\ \rule{0.5em}{0.5em}}
\begin{document}

\newcommand{\del}{{\delta}}
\newcommand{\A}{{\cal{A}}}
\newcommand{\B}{{\cal{B}}}
\newcommand{\C}{{\cal{C}}}
\newcommand{\R}{{\cal{R}}}
\newcommand{\eps}{{\varepsilon}}

\date{}

\title{Colorful bin packing}

\author{Gy\"{o}rgy D\'{o}sa\thanks{Department of Mathematics, University of Pannonia,
Veszpr\'{e}m, Hungary, \texttt{dosagy@almos.vein.hu}.}
\and Leah Epstein\thanks{ Department of Mathematics, University of Haifa,
Haifa, Israel. \texttt{lea@math.haifa.ac.il}. }}

\maketitle

\begin{abstract}
We study a variant of online bin packing, called colorful bin packing. In this problem, items that are presented one by one are to be packed into bins of size $1$. Each item $i$ has a size $s_i \in [0,1]$ and a color $c_i \in \C$, where $\C$ is a set of colors (that is not necessarily known in advance). The total size of items packed into a bin cannot exceed its size, thus an item $i$ can always be packed into a new bin, but an item cannot be packed into a non-empty bin if the previous item packed into that bin has the same color, or if the occupied space in it is larger than $1-s_i$.
This problem generalizes standard online bin packing and online black and white bin packing (where $|\C|=2$). We prove that colorful bin packing is harder than black and white bin packing in the sense that an online algorithm for zero size items that packs the input into the smallest possible number of bins cannot exist for $|\C|\geq 3$,
while it is known that such an algorithm exists for $|\C|=2$.
We show that natural generalizations of classic algorithms for bin packing fail to work for the case $|\C|\geq 3$, and moreover, algorithms that perform well for black and white bin packing do not perform well either, already for the case $|\C|=3$. Our  main results are a new algorithm for colorful bin packing that we design and analyze, whose absolute competitive ratio is $4$, and a new lower bound of $2$ on the asymptotic competitive ratio of any algorithm, that is valid even for black and white bin packing.
\end{abstract}

\section{Introduction}
Colorful bin packing is a packing problem where a sequence of colored items is presented to the algorithm, and the goal is to partition (or pack) the items into a minimal number of bins. The set of items is denoted by $\{1,2,\ldots,n\}$, where
$0 \leq s_i \leq 1$ is the size of item $i$, and $c_i \in \C$ is its color. The items are to be packed one by one (according to their order in the input sequence), such that the items packed into each bin have a total size of at most $1$, and any two items packed consecutively into one bin have different colors. Since the input is viewed as a sequence rather than a set, the natural scenario for this problem is an online one; after an item has been packed, the next item is presented. In an online environment, the algorithm packs an item without any knowledge regarding the further items, and the set $\C$ (or even its cardinality) is not necessarily known to the algorithm. The number of items, $n$, is typically unknown to the algorithm as well. In the case that inputs are viewed as sequences and not as sets, online algorithms are typically compared to optimal offline algorithms that must pack the items exactly in the same order as they appear in the input.

Consider an input for colorful bin packing with $N$ red items of size zero, followed by $N$ blue items of size zero. This input requires $N$ bins, but reordering the items reduces the required number of bins to $1$. Thus, distinguishing reasonable online algorithms from less successful ones cannot be done by comparison to offline algorithms that are allowed to reorder the input. The offline algorithms to which we compare our online algorithm are therefore not allowed to reorder the input.
Such an optimal offline algorithm is denoted by OPT (OPT denotes a specific optimal offline algorithm, and we use OPT to denote also the number of bins that it uses for a given input).
The absolute competitive ratio of an algorithm is the supremum ratio over all inputs between the number of bins that it uses and the number of bins that OPT uses (for the same input). The asymptotic competitive ratio is the limit of absolute competitive ratios $R_K$ when $K$ tends to infinity and $R_K$ takes into account only inputs for which OPT uses at least $K$ bins. Note that (by definition), for a given algorithm (for some online bin packing problem), its  asymptotic competitive ratio never exceeds its absolute competitive ratio.

The special case of colorful bin packing, called black and white packing, was introduced in \cite{waoa12}. In this variant there are just two colors, called black and white. The motivation for black and white bin packing was in assignment to containers of items so that any two items packed consecutively into one bin can be easily distinguished later. An example for such items was articles that are printed on either white paper or recycled paper, in which case bins simply contain piles of paper, and packing articles printed on the two kinds of paper so that the two kinds alternate allows to distinguish them easily.
Colorful bin packing is the generalization where there is a number of different kinds of printing paper (for example, paper of distinct colors that is used for printing advertisement flyers), and in order to distinguish between two items (two piles of flyers), they have to have different colors of printing paper.

It was shown \cite{waoa12} that the natural generalizations of several well-known algorithms fail to obtain finite competitive ratios. For example, Next Fit (NF) for colorful bin packing (and for black and white bin packing) packs items into a single active bin, and moves to a new active bin as soon as packing an item into the active bin is impossible. For standard bin packing, a new active bin is opened when there is no space for the new item in the previous active bin, but for colorful bin packing a new bin will be opened either in this case, or when the last item of the active bin and the new item have the same color.  It was shown in \cite{waoa12} that this algorithm fails to achieve a finite competitive ratio (already for two colors). Harmonic algorithms \cite{LeeLee}, that partition items into sub-inputs according to sizes and pack each sub-input independently of the other sub-inputs, were also shown to have unbounded competitive ratios \cite{waoa12}.
On the other hand, there are some basic online bin packing algorithms that can be adapted successfully for black and white bin packing. The generalizations of Any Fit (AF) algorithms, that never use a new bin unless there is not other way to pack a new item, were shown to have constant absolute competitive ratios. The generalized versions of such algorithms for colorful bin packing open a new bin only if the current item cannot be packed into an existing bin such that the color constraint is kept and the total size of items packed into the bin will remain at most $1$. Three important special cases of AF are First Fit (FF), Best Fit (BF), and Worst Fit (WF). These algorithms select the bin where a new item is packed (out of the feasible options) to be the bin of minimum index, the a bin with the smallest empty space, and a bin with the largest empty space, respectively. The difference with classical bin packing is that the infeasible bins can be of two kinds, either those that do not have sufficient empty space, and those where the last packed item has the same color as the color of the new item.  It was shown that all AF algorithms have absolute and asymptotic competitive ratios of at least $3$ and at most $5$ for black and white bin packing. Vesel\'y \cite{Ves} tightened the bound and showed an upper bound of $3$ on the absolute competitive ratio of AF algorithms. The results of \cite{waoa12,Ves} in fact show that the absolute competitive ratio of WF is $2+\frac{1}{d-1}$, if all items have sizes in $(0,\frac 1d]$ (while FF and BF still have absolute and asymptotic competitive ratios of exactly $3$ even in this restricted case). The positive results for AF algorithms are valid only for black and white packing but not for colorful bin packing. In contrast to these last results, we will show that AF algorithms do not have constant (absolute or asymptotic) competitive ratios for colorful bin packing with $|\C|\geq 3$.

Colorful bin packing is also a generalization of standard bin packing (since already black and white bin packing is such a generalization). For standard bin packing, NF has an asymptotic and an absolute competitive ratio of $2$ \cite{J74}. Any Fit algorithms all have absolute competitive ratios of at most $2$ \cite{U71,John,J74,johnson1974worst,DS12} (some of these algorithms have smaller absolute or asymptotic competitive ratios; for example, in \cite{DS12} it is shown that FF has an absolute competitive ratio of $1.7$, and an asymptotic bound of $1.7$ was known for FF for many years \cite{johnson1974worst}). There are algorithms with smaller asymptotic competitive ratios, and the best possible asymptotic competitive ratio is known to be in $[1.5403,1.58889]$ \cite{van1992improved,Seiden02J,BBG}.
Other variants of bin packing where the sequence of items must remain ordered even for offline solutions include {\it Packing with LIB (largest item in the bottom)} constraints, where an item can be packed into a bin with sufficient space if it is no larger than any item packed into this bin \cite{M02,FM,manyem2003approximation,epstein2009online,DTY}.

In our algorithms, we say that a bin $B$ has color $c$ if the last item that was packed into $B$ has this color. Obviously, a bin changes its color as items are packed into it. For simplicity, we use names of colors as the elements of $\C$.
Another algorithm for black and white bin packing presented in \cite{waoa12} is the algorithm Pseudo. This algorithm keeps a list of pseudo-bins, each being a list of (valid) bins. Each new item is assigned to a pseudo-bin and then to a bin of this pseudo-bin.
The color of a (non-empty) pseudo-bin is defined to be the color of its last bin.
An item is first assigned to a pseudo-bin of the opposite color (that is, a white item to a black pseudo-bin and a black item to a white pseudo-bin), opening a new pseudo-bin for the item if this assignment is impossible (there is no pseudo-bin of the other color). A pseudo-bin is split into bins in an online fashion; a new item is packed into the last bin of the pseudo-bin where it was assigned (note that this is always possible with respect to the color of the item), and a new bin (for this pseudo-bin) is opened if the empty space in the current last bin of the pseudo-bin is insufficient. In the case that there are multiple pseudo-bins that are suitable for the new item (multiple pseudo-bins have the opposite color), then in principle any one of them is chosen (that is, the analysis holds for arbitrary tie-breaking), but the algorithm was defined such that such a bin of minimum index is selected.
A simple generalization of Pseudo for colorful packing is to assign a new item to a pseudo-bin of a minimum index whose color is different from the color of the new item. We show that this algorithm has an unbounded (absolute and asymptotic) competitive ratio. We show, however, that the tie-breaking rule can be modified, and a variant of this algorithm, called {\sc Balanced-Pseudo} ($BaP$), has an absolute (and asymptotic) competitive ratio of $4$. Roughly speaking, $BaP$ tries to balance the colors of pseudo-bins; for a new item it finds the most frequent color of pseudo-bins (excluding the pseudo-bins having the same color as the new item), and assigns the new item to such a pseudo-bin. Interestingly, this approach is much more successful.

Finally, we design two new lower bounds. We give a lower bound of $2$ on the asymptotic (and absolute) competitive ratio of any algorithm. This last lower bound is valid already for $|\C|=2$ (i.e., for black and white bin packing) and it significantly improves the previous lower bound of approximately 1.7213 \cite{waoa12}. We also consider zero size items. It was shown in \cite{waoa12} that Pseudo is an optimal algorithm for zero size items (its absolute competitive ratio is $1$). We show that in contrast, if $|\C|\geq 3$, then the asymptotic competitive ratio of any algorithm for such items is at least $\frac 32$. This implies that the two problems (colorful bin packing and black and white bin packing) are different.

In Section \ref{alg} we demonstrate that the existing algorithms have poor performance, we define algorithm $BaP$, analyze its competitive ratio for arbitrary items and for zero size items, and show that the analysis is tight. Lower bounds for arbitrary online algorithms are given in Section \ref{lbs}. 

\section{Algorithms}
\label{alg}
We start this section with examples showing that the algorithms that had a good performance for black and white bin packing (or their natural generalizations, all defined in the introduction) have a poor performance for colorful packing.

\begin{proposition}\label{FFWFBF}
The algorithms FF, BF, WF, AF, and Pseudo have unbounded asymptotic competitive ratios for colorful bin packing.
\end{proposition}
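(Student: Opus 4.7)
The plan is to exhibit, for each of the five algorithms, an adversarial input sequence that forces the algorithm to use arbitrarily many bins while the offline optimum uses only a bounded number. Since FF, BF, and WF are themselves Any-Fit algorithms, defeating any one of them covers the AF entry in the list as well. Throughout I use three colors $R, G, B$ and items of size $0$, so that only the color constraint is active.

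The basic input I propose is
\[
I_N \;=\; R,\; \underbrace{G, B, G, R, R}_{\text{round } 1},\; \underbrace{G, B, G, R, R}_{\text{round } 2},\; \ldots,\; \underbrace{G, B, G, R, R}_{\text{round } N}.
\]
For FF I would prove by induction on the round index $k$ that after round $k$ the algorithm has exactly $k+1$ bins, all ending in color $R$: in round $k+1$ the minimum-index bin is feasible at each of the four steps $G, B, G, R$ and absorbs all four items, and then the trailing $R$ conflicts with every existing bin's color and opens a fresh bin. Because items have size $0$, every bin has empty space $1$ at all times, so BF and WF coincide with FF under any index-based tie-breaking rule and obey the same induction. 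For the natural generalization of Pseudo, the identical argument runs at the pseudo-bin level: pseudo-bin $1$ is the minimum-index pseudo-bin of opposite color at each of the four in-round steps, and the trailing $R$ forces a new pseudo-bin; since items have size $0$, pseudo-bins and physical bins coincide. Thus each of the five algorithms uses $N+1$ bins on $I_N$.

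Next I would exhibit a $2$-bin offline packing of $I_N$. The initial $R$ together with round $1$'s prefix $G, B, G, R$ go into bin $1$; round $1$'s trailing $R$ opens bin $2$. For every subsequent round I route the five items $(G, B, G, R, R)$ into the bins $(1, 2, 2, 1, 2)$, which preserves the invariant that after every round both bins end in color $R$. To make the \emph{asymptotic} (and not merely absolute) competitive ratio unbounded, I would concatenate $P$ copies of the above construction on pairwise disjoint three-color palettes: each copy contributes $2$ to OPT and $N+1$ to each algorithm's bin count, giving inputs whose OPT $=2P$ is arbitrarily large while the ratio $(N+1)/2$ is itself unbounded in $N$.

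The main obstacle is the verification that the proposed two-bin offline packing is valid throughout the $N$ rounds, in particular that the two terminating $R$'s of every round can be placed consistently with the color constraint while simultaneously restoring both bins to color $R$, so that the invariant is inherited by the next round. This is a direct case check, but it is the only non-mechanical step of the argument and it is precisely what makes the construction succeed.
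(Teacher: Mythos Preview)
Your construction correctly forces FF, Pseudo, and hence at least one AF algorithm, to open $N+1$ bins on a single copy of $I_N$, and the two-bin offline packing is valid (the round-invariant checks out exactly as you describe). Two genuine gaps remain, however.

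First, the reduction of BF and WF to FF assumes that ties in empty space are broken by minimum index. That is not part of the definition of either algorithm. With zero-size items every bin has empty space exactly $1$, so in round $2$ a version of BF that breaks ties differently can place the first $G$ into bin $2$, then $B$ into bin $1$, then the forced $G$ into bin $1$, then $R$ into bin $2$, and finally the trailing $R$ into bin $1$ (which is now colored $G$), ending round $2$ with still only two bins. Thus your input does not defeat every instantiation of BF or WF. The paper avoids this by using strictly positive item sizes: all items have size $\eps$ for BF, so that after the first non-white item is placed, that bin has a strictly larger load than all others and is the unique Best-Fit choice thereafter; for WF two sizes $\eps$ and $\eps^2$ are used to steer the choices.

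Second, and more seriously, the concatenation step fails. After copy $1$, FF has $N+1$ bins, all of color $R_1$. The first item of copy $2$ is $R_2 \neq R_1$, so FF places it into bin $1$, not into a new bin; and in each subsequent round of copy $2$ the trailing $R_2$ lands in the lowest-indexed bin still colored $R_1$ rather than opening a fresh bin. When copy $2$ finishes, FF still has exactly $N+1$ bins, now all colored $R_2$. Hence the concatenated input has $OPT = 2P$ but FF uses only $N+1$ bins, and no asymptotic lower bound follows. The paper's construction instead builds a width parameter $M$ directly into each phase (it starts every phase with $M$ items of a single color), so that $OPT = M$ can be made arbitrarily large while the algorithm uses $\Theta(NM)$ bins. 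To repair your approach you would need either to introduce such a width parameter into $I_N$ itself, or to use positive sizes so that the bins of earlier copies are genuinely unavailable when later copies arrive.
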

\begin{proof}
Let $M \geq 4$ be a large integer, and consider the following input. The input is presented in phases. Each phase consists of $M$ white items, followed by $2M$ items, each of which is either red or blue, with alternating colors, i.e., the colors alternate between red and blue, starting with a red item, and there are $M$ items of each color in each phase.  There are $N$ phases in total for a large integer $N \geq 2$.  Let $\eps=\frac{1}{N^2M^2}$, $\del=\eps^2=\frac{1}{N^4M^4}$. We will define the sizes of items differently for the different algorithms. Item sizes will be in $(0,\eps]$. The total size of all items will not exceed $3MN\eps = \frac{3}{MN}<1$, thus the valid solutions are those where no bin contains two items of one color that are packed consecutively. An optimal solution uses $M$ bins, and it packs one item of each color into each bin in each phase (in each phase, a white item, a red item, and a blue item are packed into each bin of OPT in this order).

Since FF, BF, and WF are  specials case of AF, the property for AF will follow from the examples given for those algorithms.
Consider FF, BF, and Pseudo (defined such that in a case of a tie it chooses the pseudo-bin of the minimum index), all items have sizes of $\eps$. FF acts as follows. In the first phase, each white item is packed into a different bin, and all red and blue items are packed into the first bin. In any further phase, the first white item is packed into the first bin, any additional white item requires a new bin, and the red items and blue items are packed into the first bin. The total number of bins is $M+(N-1)(M-1)=NM-N+1$.
Pseudo will act in the same way as FF, only it assigns the items to pseudo-bins, and each non-empty pseudo-bin only has one bin. BF packs the white items of the first phase into $M$ bins, then it packs a red item into one of its bins, and all further items that are not white will be packed into this bin as well. In each phase, one white item is added to the bin that contains red and blue items (in addition to a few white items), and this bin will always have the largest total size of items. The remaining white items are always packed into new bins, and therefore the resulting number of bins is $NM-N+1$ again. Finally, for WF, the sizes of the last white $M-1$ items of each phase are $\eps$, and the size of any other item is $\del$. Note that the total size of the $N$ first white items of the $N$ phases plus all red items and all blue items is $(2M+1) N\del \leq 3MN\del =\frac{3}{M^3N^3} < \frac {1}{M^2N^3}<\eps$. Thus, whenever it is possible, items will be packed into a bin that does not contain an item of size $\eps$. All items except for the larger white items (whose sizes are $\eps$) will be packed into the first bin, resulting in $NM-N+1$ bins.
For sufficiently large $M$, the competitive ratio is at least $N$, and thus the asymptotic competitive ratios of these algorithm are unbounded.
\end{proof}

\subsection{A new algorithm}
We define an algorithm called {\sc Balanced-Pseudo} ($BaP$). The algorithm keeps a sequence of pseudo-bins denoted by $P_1,P_2,\ldots$, where each pseudo-bin is a sequence of bins. For pseudo-bin $P_j$, its sequence of bins is denoted by $B^j_1,$$B^j_2,$$\ldots,$$B^j_{n_j}$. Let $k$ denote the number of pseudo-bins (at a given time). For any $1 \leq j \leq k$, $C_j$ denotes the color of the last item assigned to $P_j$ (this will be the color of the last item of $B^j_{n_j}$), and it is called the color of the pseudo-bin $P_j$.

Algorithm $BaP$ is similar to algorithm Pseudo \cite{waoa12}, but
it tries to balance the number of pseudo-bins of different colors,
and it prefers to assign an item to a pseudo-bin of a color that
occurs a maximum number of times (excluding  pseudo-bins having
the same color as the new item). For a new item $i$, if all
pseudo-bins have the color $c_i$, then a new pseudo-bin $P_{k+1}$
is opened, where it consists of one bin  $B^{k+1}_1$. In this
case, we let $k=k+1$, $n_k=1$. Otherwise, for any color $g \neq
c_i$, let $N_g$ be the number of pseudo-bins of color $g$. Let
$g'$ be a color for which $N_{g'}$ is maximal. Assign item $i$ to
a pseudo-bin $P_j$ of color $g'$. If $i$ can be packed into
$B^j_{n_j}$ (with respect to the total size of items, as by
definition the color of $P_j$ is $g' \neq c_i$, so the color of
$i$ does not prevent its packing), then add it to this bin (as its
last item), and otherwise, let $n_j=n_j+1$, and pack $i$ into
$B^j_{n_j}$ as its only item. For all cases, if $i$ was assigned
to pseudo-bin $P_j$, then let $C_j=c_i$ (this is done no matter
how $j$ is chosen).

\subsection{Analysis} The analysis separates the effect of
sizes from the effect of colors. This is possible since $BaP$
(similarly to Pseudo) already has such a separation. The number of
pseudo-bins is independent of the sizes of items, while the
partition of a pseudo-bin into bins is independent of the colors.
The algorithm that is applied on every pseudo-bin is simply NF,
and moreover, a new bin is used when there is no space for the
current item in the previous bin of the same pseudo-bin. Every
pair of consecutive bins of one pseudo-bin have items whose total
size exceeds $1$, thus the resulting bins are occupied by a total
size above $\frac 12$ on average, possibly except for one bin of
each pseudo-bin. We show that at each time that a new pseudo-bin
is opened, an optimal solution cannot have less than half the
number of bins, even if items have zero sizes. Informally, the
reason is that a new pseudo-bin is opened when all pseudo-bins
have the color of the new item. However, once the number of
pseudo-bins of this color exceeds half the number of pseudo-bins,
$BaP$ prefers to use such bins as much as possible (in this case
their number decreases), and an increase in their number can only
be caused by an input where there is a large number of items of
the same color arriving almost consecutively. Obviously, such
inputs require large numbers of bins in any solution.

We let $LB_0=\sum_{i=1}^n s_i$. Obviously, $OPT \geq LB_0$. Let $1 \leq i \leq j \leq n$. For
any color $c$ that appears in the subsequence of consecutive $j-i+1$ items
$i,i+1,...,j$, let $C(i,j,c)$ be the number of times that it
appears. Let
\begin{equation}\label{eqa} LB(i,j,c)=C(i,j,c)-(j-i+1-C(i,j,c))=2C(i,j,c)-j+i-1 \ , \end{equation}
$LB(i,j)=\max_c LB(i,j,c)$, and $LB_1=\max_{i,j} LB(i,j)$. For any
non-empty input we have $LB_1 \geq 1$ since $LB(i,i,c_i)=1$ for
any $i$. Note that $LB(i,j,c)$ is positive only if the number of
times that $c$ appears in the subsequence $i,\ldots,j$ is more
than $\frac{j-i+1}2$ (i.e., more than half the items of this
subsequence are of color $c$), and thus for computing $LB_1$ it is
sufficient to consider for every subsequence only a color $c$ that
appears a maximum number of times in this subsequence. The
following lemma generalizes a property proved in \cite{waoa12}.

\begin{lemma}\label{LBone}
$OPT \geq LB_1$.
\end{lemma}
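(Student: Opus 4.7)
The plan is to prove a per-bin inequality and then sum it over all bins of OPT. Fix $i\leq j$ and a color $c$, and write $m=C(i,j,c)$ and $L=j-i+1$. For any bin $B$ in the OPT packing, let $a_B$ be the number of color-$c$ items among positions $i,\ldots,j$ placed in $B$, and $b_B$ the number of items of other colors among positions $i,\ldots,j$ placed in $B$. The goal is to show $a_B \le b_B+1$ for every $B$; summing gives $2m-L = \sum_B(a_B-b_B)\le \#\{B:B\text{ used by OPT}\}\le OPT$, which is exactly $LB(i,j,c)\le OPT$. Taking the maximum over $i,j,c$ then yields $LB_1 \le OPT$.

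The essential step is the per-bin inequality. For this I would argue as follows. Since the offline optimum must respect the input order, the items in a single bin $B$ are packed in increasing order of their input indices, say at positions $p_1<p_2<\cdots<p_t$. The positions of $B$ that lie in the interval $[i,j]$ form a contiguous block $p_r,p_{r+1},\ldots,p_{r+s}$ of this sequence, because the $p_\ell$ are sorted. Consequently, the color-alternation constraint inside $B$, when restricted to positions in $[i,j]$, still requires that no two consecutive elements of that block share a color. A sequence in which at most one color $c$ and any number of other colors appear, and in which no two adjacent entries have color $c$, contains at most one more color-$c$ entry than entries of other colors; this yields $a_B \le b_B+1$.

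The rest is bookkeeping: $\sum_B a_B = C(i,j,c) = m$ and $\sum_B b_B = L - m$, and only bins that actually contain some item from $[i,j]$ contribute a positive term to $a_B - b_B$, but each such bin is counted once in OPT. Hence $\sum_B (a_B - b_B) \leq OPT$, giving $2m - L \leq OPT$, i.e., $LB(i,j,c) \leq OPT$ by (\ref{eqa}).

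The only subtle point is justifying that items of $B$ coming from the interval $[i,j]$ form a contiguous block in $B$'s packing order; this is where the assumption that OPT cannot reorder the input sequence (as discussed in the introduction) is used. Once this is in hand, the counting argument is immediate and matches the bound proved for $|\C|=2$ in \cite{waoa12}.
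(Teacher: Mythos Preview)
Your proof is correct and follows essentially the same approach as the paper. The paper phrases the key observation as ``remove the items outside $[i,j]$ one by one from the ends; the packing stays valid,'' while you phrase it as ``the items of $B$ with indices in $[i,j]$ form a contiguous block of $B$'s packing order because the indices are sorted''; these are two ways of saying the same thing, and both lead to the identical per-bin bound $a_B\le b_B+1$ and the same summation. One small wording quibble: your sentence about ``only bins that actually contain some item from $[i,j]$ contribute a positive term'' is not the cleanest justification for $\sum_B(a_B-b_B)\le OPT$; it suffices to say each of the $OPT$ summands is at most $1$.
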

\begin{proof}
We prove $OPT \geq LB(i,j,c)$ for any $1 \leq i\leq j \leq n$. If $c$ appears at most $\frac{j-i+1}2$ times, then we are done as $LB(i,j,c)\leq 0$.
Otherwise, consider an optimal solution for the input. Remove the items $1,2,\ldots,i-1$ one by one in this order from OPT. Each removed item must be the first item in its bin in the packing resulting from removing its preceding items. Thus, the packing remains valid. Similarly, remove the items $n,n-1,\ldots,j+1$ one by one. Each removed item must be the last item of its bin in the packing that results from removing its succeeding items, and the packing remains valid. Some of the bins may become empty. Let $L \leq OPT$ be the remaining number of bins. Recall that these bins contain $j-i+1$ items in total. Since there is an item of a color that is not $c$ between every two items of color $c$, a bin that contains $x$ items can contain at most $\frac{x-1}2+1=\frac x2+\frac 12$ items of color $c$, and thus $\frac{j-i+1+LB(i,j,c)}2=C(i,j,c)\leq \frac{j-i+1}2+\frac L2$, implying that $OPT \geq L \geq LB(i,j,c)$.
\end{proof}

Consider the action of $BaP$, and let $k$ be the index of the last pseudo-bin (i.e., $k$ is the final value of the variable $k$).
For $1 \leq m \leq k$, let $LB^m$ denote $LB_1$ at the time that the first item is assigned to $P_m$. Let $Y_m$ be the (index of the) first item that is assigned to $P_m$, and let $X_m$ be its color (thus $Y_1=1$ holds by definition, i.e., the first item of the input is also the first item assigned to the first pseudo-bin). For convenience, let $Y_{k+1}=n+1$. Let phase $m$ be the subsequence of consecutive items $Y_m,\ldots,Y_{m+1}-1$. In the lemmas below, when we discuss properties holding during phase $m$, we mean that they hold starting the time just after
$Y_m$ is packed and ending right after $Y_{m+1}-1$ is packed.

\begin{theorem} For any $1 \leq m \leq k$, there exists $i\leq Y_m$ such that $C(i,Y_m,X_m) \geq \frac {m+3}4+\frac{Y_m-i}2$.
\end{theorem}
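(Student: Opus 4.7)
My plan is to proceed by induction on $m$. The base case $m=1$ is immediate: choose $i = Y_1 = 1$, so $C(1,1,X_1) = 1 = \tfrac{1+3}{4} + 0$.

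For the inductive step ($m \geq 2$), I would define a ``threshold time''
\[
T \;=\; \max\bigl\{\,t < Y_m \;:\; c_t \neq X_m \text{ or } t = Y_j \text{ for some } j \leq m-1\,\bigr\}
\]
(with $T=0$ if the set is empty). By maximality of $T$, items at positions $T+1, T+2, \ldots, Y_m$ all have color $X_m$ and none of them except $Y_m$ opens a new pseudo-bin. Because $BaP$ always sends a color-$X_m$ item to a pseudo-bin whose current color is $\neq X_m$ (or opens a new bin only when no such pseudo-bin exists), each item in positions $T+1, \ldots, Y_m-1$ converts exactly one pseudo-bin from color $\neq X_m$ to color $X_m$. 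Letting $n_{X_m}^{(t)}$ denote the number of pseudo-bins of color $X_m$ just after item $t$, and using $n_{X_m}^{(Y_m-1)} = m-1$, this gives the key identity $n_{X_m}^{(T)} = m - (Y_m - T)$.

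I would then split into two cases. In \textbf{Case 1}, if $Y_m - T \geq (m+1)/2$, I choose $i = T+1$; since all items in $i, \ldots, Y_m$ have color $X_m$, $C(i,Y_m,X_m) = Y_m - T$, and the required inequality reduces algebraically to $Y_m - T \geq (m+1)/2$. In \textbf{Case 2}, $Y_m - T < (m+1)/2$, so $n_{X_m}^{(T)} > (m-1)/2$. The clean sub-case is $T = Y_{j^*}$ for the largest $j^* \leq m-1$ with $X_{j^*} = X_m$: applying the induction hypothesis at $j^*$ produces $i' \leq T$ with $C(i',T,X_m) \geq \tfrac{j^*+3}{4} + \tfrac{T-i'}{2}$, and summing with the $Y_m - T$ further color-$X_m$ items in $(T, Y_m]$ yields the claim provided $Y_m - T \geq (m-j^*)/2$. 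This inequality holds because the $m - j^*$ pseudo-bins $P_{j^*+1}, \ldots, P_m$ open during $(Y_{j^*}, Y_m]$ and each consumes at least one item.

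The main obstacle is the remaining sub-case of Case 2 in which $c_T \neq X_m$ (so $T$ is a mid-phase item, or $T = Y_j$ with $X_j \neq X_m$); here no previous opening time $Y_{j^*}$ with $X_{j^*} = X_m$ coincides with $T$, so the IH cannot be invoked as above. My strategy would be to strengthen the induction to the following statement: for every time $j$ and color $c$, if $n_c^{(j)} = \ell \geq 1$ then there exists $i \leq j$ with $C(i,j,c) \geq \tfrac{\ell+3}{4} + \tfrac{j-i}{2}$. Applying this at time $T$ with $c = X_m$ and $\ell = m - (Y_m - T)$, the extra $Y_m - T$ color-$X_m$ items in $(T, Y_m]$ contribute a net slack of exactly $(\ell-m)/4 + (Y_m - T)/2 = (Y_m - T)/4 \geq 0$ — precisely enough to close the induction. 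Establishing the strengthened statement (in particular, its base case $\ell = 1$, and tracing back through the ``last flip to $X_m$'' history of the pseudo-bins to locate an earlier state to which the IH applies) is where I expect the bulk of the technical work.
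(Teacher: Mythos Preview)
Your base case and Case~1 are fine, and the ``clean sub-case'' of Case~2 (where $T=Y_{m-1}$ and $X_{m-1}=X_m$) is also correct. The real problem is the remaining sub-case, and unfortunately your proposed fix --- strengthening the induction hypothesis to ``for every time $j$ and color $c$ with $n_c^{(j)}=\ell\ge 1$ there is $i\le j$ with $C(i,j,c)\ge\frac{\ell+3}{4}+\frac{j-i}{2}$'' --- is simply false. Take the five-item input $R,R,B,G,B$ and a run of $BaP$ in which item~3 goes to $P_1$, item~4 (green) goes to $P_2$ (a legal tie-break), and item~5 then goes to $P_2$. After item~5 both pseudo-bins are blue, so $\ell=n_B^{(5)}=2$, yet the only blue items are at positions $3$ and $5$, and one checks that $C(i,5,B)<\frac{5}{4}+\frac{5-i}{2}$ for every $i\le 5$. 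Hence the strengthened statement cannot be established, and your plan for the obstacle sub-case collapses.

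The paper sidesteps this entirely. Rather than tracing back from your $T$ and trying to invoke an inductive hypothesis at an earlier time, it looks inside phase $m-1$ for the \emph{last} moment at which the number of pseudo-bins of color $X_m$ was at most $(m-1)/2$, and takes $i$ to be the item just after that moment. From that moment until $Y_m$ every non-$X_m$ item is forced (by the balancing rule) into an $X_m$-colored pseudo-bin, so within each pseudo-bin the colors alternate between $X_m$ and non-$X_m$; a direct count of how many pseudo-bins must switch from non-$X_m$ to $X_m$ then gives $C(i,Y_m,X_m)\ge\frac{m+4}{4}+\frac{Y_m-i}{2}$ with no appeal to the induction hypothesis at all. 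The IH is used only in the complementary case, where the $X_m$-count stays above $(m-1)/2$ throughout phase $m-1$ --- but that forces $X_{m-1}=X_m$, which is exactly your clean sub-case. So the missing idea is to pivot on the last threshold-crossing time rather than on your $T$, and to replace the attempted backward induction by a forward counting argument.
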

\begin{proof}
We prove the claim by induction.  For $m=1$, $Y_m=1$, and
$C(1,1,c_1)=1$ as required. For $m=2$, the items $Y_2$ and $Y_2-1$
have the same color $X_2$ (as $Y_2-1$ was assigned to $P_1$ and
$Y_2$ is assigned to $P_2$). Thus, we find $C(Y_2-1,Y_2,X_2)=2$. Next, assume that the claim
holds for some $m \geq 2$. We will prove the claim for $m+1$ by
considering phase $2 \leq m \leq k-1$.

\begin{lemma}\label{first}
If at some time in phase $m$ (where $2 \leq m \leq k-1$) an item
$i$ of a color that is not $X_{m+1}$ is assigned to a pseudo-bin
of a color that is not $X_{m+1}$ (the two last items that the
pseudo-bin receives are of colors different from $X_{m+1}$), then
just before assigning $i$ (the second item out of the two items
whose colors are not $X_{m+1}$) there are less than $(m+1)/2$
(that is, at most $m/2$) pseudo-bins of color $X_{m+1}$.
\end{lemma}
\begin{proof}
During phase $m$, there are $m$ pseudo-bins. Assume by contradiction that when $i$ arrives there are at least $(m+1)/2$ pseudo-bins of color $X_{m+1}$. Since the color of $i$ is different, by the definition of $BaP$, $i$ must be assigned to a pseudo-bin of this color, as the number of pseudo-bins of any other color cannot exceed $(m-1)/2<(m+1)/2$. A contradiction.
\end{proof}

\begin{lemma}\label{samecol}
\label{seco} If during phase $m$
there are always at least $(m+1)/2$ pseudo-bins of color
$X_{m+1}$, then $X_m=X_{m+1}$. In this case, letting $t$ be the
number of items of color $X_m$ in phase $m$, phase $m$ contains
$t-1$ items of other colors.
\end{lemma}
\begin{proof}
When phase $m$ starts, just after $Y_m$ is packed, there are $m$ pseudo-bins of color $X_m$ and no pseudo-bins of any other color. Since there are at least $(m+1)/2 \geq 1$ pseudo-bins of color $X_{m+1}$ at this time, we find $X_m=X_{m+1}$. During the phase, any item whose color is not $X_{m+1}$ is assigned to a pseudo-bin whose color is $X_{m+1}$. After item $Y_{m+1}-1$ is packed, once again there are $m$ pseudo-bins of color $X_m$ (since when $Y_{m+1}$ is packed, the number of pseudo-bins of color $X_m$ becomes $m+1$).
Thus, during the phase, starting the time right after $Y_m$ is packed, every pseudo-bin receives the same number of items of color $X_{m+1}=X_m$ and other colors. Since $c_{Y_m}=X_m$, the number of items of color $X_m$ is larger by $1$ than the number of items of other colors out of the items of phase $m$.
\end{proof}

If the condition of Lemma \ref{seco} holds, then let $i$ be such
that $C(i,Y_m,X_m) \geq \frac {m+3}4+\frac{Y_m-i}2$, and let $t$ be
the number of items of color $X_m=X_{m+1}$ in phase $m$. We have
$C(i,Y_{m+1},X_{m+1}) \geq \frac {m+3}4+\frac{Y_m-i}2+t$, and
$Y_{m+1}-Y_m=2t-1$. Thus, $C(i,Y_{m+1},X_{m+1}) \geq \frac
{m+3}4+\frac{Y_m-i}2+\frac{Y_{m+1}-Y_m+1}2 > \frac
{(m+1)+3}4+\frac{Y_{m+1}-i}2$ as required.

\begin{lemma}
If there is a time in phase $m$ that
at most $m/2$ bins were of color $X_{m+1}$, then there exists an
index $i$ such that $Y_m \leq i \leq Y_{m+1}-1$ where
$C(i,Y_{m+1},X_{m+1}) \geq \frac {m+4}4+\frac{Y_{m+1}-i}2$.
\end{lemma}
\begin{proof}
Consider the last time during phase $m$ that there are at most
$m/2$ bins of color $X_{m+1}$, and let $i$ be the first item right
after this time. Since after item $Y_{m+1}-1$ arrives, all $m$
pseudo-bins have color $X_{m+1}$ and $m>m/2$, the time just after
$Y_{m+1}-1$ arrives does not satisfy the condition, so the last
such time must be earlier, $i$ is well-defined, and $i \leq
Y_{m+1}-1$. We have $c_i=X_{m+1}$ as its assignment to a
pseudo-bin increased the number of pseudo-bins of this color.
Moreover, starting this time, there are at least $(m+1)/2$ bins of
color $X_{m+1}$ at all times until after the arrival of $Y_{m+1}$
(by the choice of the time, and since $Y_{m+1}$ has the same color
and causes the creation of a new pseudo-bin of this color). If $m$
is even, then just before $i$ is packed, there are exactly $m/2$
pseudo-bins of color $X_{m+1}$ and $m/2$ pseudo-bins of other
colors, and after item $Y_{m+1}$ is assigned, there are $m+1$
pseudo-bins of color $X_{m+1}$. Moreover, while the items
$i,\ldots,Y_{m+1}-1$ are being assigned, every item whose color is
not $X_{m+1}$ is assigned to a pseudo-bin of color $X_{m+1}$, so
every pseudo-bin receives alternating colors (items of color
$X_{m+1}$ alternate with other colors). Thus, if there are $t$
items whose colors are not $X_{m+1}$ among these items, there are
$t+\frac m2$ items of color $X_{m+1}$, and the total number of
items is $Y_{m+1}-i=2t+\frac m2$. Including $Y_{m+1}$, we have
$C(i,Y_{m+1},X_{m+1})=t+\frac m2+1=\frac m2+1
+\frac{Y_{m+1}-i}2-\frac m4=\frac {(m+1)+3}4+\frac{Y_{m+1}-i}2$ as
required. If $m$ is odd, then if there are $t$ items whose colors
are not $X_{m+1}$ among these items, there are $t+\frac {m+1}2$
items of color $X_{m+1}$, and the total number of items is
$Y_{m+1}-i=2t+\frac {m+1}2$. We have $C(i,Y_{m+1},X_{m+1})=t+\frac
{m+1}2+1=\frac m2+\frac 32 +\frac{Y_{m+1}-i}2-\frac {m+1}4 > \frac
{m+4}4+\frac{Y_{m+1}-i}2$ as required. \end{proof} This
completes the proof of the theorem. \end{proof}

The next corollary follows from choosing
 $j=Y_k$ and $i$ such that $C(i,Y_k,X_k)$ $\geq \frac{m+3}4+\frac{Y_m-i}2$, and using (\ref{eqa}).
\begin{corollary}
We have $LB_1 \geq LB^k  \geq LB(i,Y_k,X_k) \geq \frac {k+1}2$.
\end{corollary}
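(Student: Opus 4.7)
The plan is to obtain the inequality $LB(i,Y_k,X_k) \geq \frac{k+1}{2}$ by direct algebra from the theorem, and then bootstrap this to $LB^k$ and $LB_1$ via the monotonicity of the quantity $LB_1$ under prefix extensions.

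First I would invoke the preceding theorem with $m=k$ to obtain an index $i\leq Y_k$ satisfying
\[
C(i,Y_k,X_k) \geq \frac{k+3}{4}+\frac{Y_k-i}{2}.
\]
Substituting this directly into the definition (\ref{eqa}) gives
\[
LB(i,Y_k,X_k) = 2\,C(i,Y_k,X_k) - (Y_k-i) - 1 \geq \frac{k+3}{2} + (Y_k-i) - (Y_k-i) - 1 = \frac{k+1}{2},
\]
which establishes the bottom inequality. There is no real obstacle here; the factor of $2$ in (\ref{eqa}) neatly cancels the $\frac{Y_k-i}{2}$ term, leaving a clean bound that depends only on $k$.

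Next I would connect $LB(i,Y_k,X_k)$ to $LB^k$. Recall that $LB^k$ is the value of $LB_1$ at the moment the first item is assigned to $P_k$, i.e., just after item $Y_k$ has been packed; thus all items indexed by integers in $[1,Y_k]$ are available when $LB^k$ is computed. Since $i \leq Y_k$, the pair $(i,Y_k)$ together with the color $X_k$ is a legitimate candidate in the maximization that defines $LB_1$ at that moment, so $LB^k \geq LB(i,Y_k,X_k)$.

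Finally, $LB_1$ (as used in Lemma \ref{LBone}, computed over the entire input) is a supremum over a set of pairs $(i',j')$ that strictly includes those available when item $Y_k$ was being packed, so $LB_1 \geq LB^k$. Chaining these three inequalities gives $LB_1 \geq LB^k \geq LB(i,Y_k,X_k) \geq \frac{k+1}{2}$, which is precisely the statement of the corollary.
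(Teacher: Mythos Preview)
Your proposal is correct and mirrors the paper's own justification: apply the theorem with $m=k$, plug the resulting bound on $C(i,Y_k,X_k)$ into equation~(\ref{eqa}) to obtain $LB(i,Y_k,X_k)\geq \frac{k+1}{2}$, and then use the fact that $LB^k$ and $LB_1$ are maxima over supersets of candidates. The paper states this in one line (``follows from choosing $j=Y_k$ and $i$ such that \ldots, and using (\ref{eqa})''), and you have simply written out the algebra and the two monotonicity observations explicitly.
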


\begin{corollary}\label{4n2}
The absolute competitive ratio of $BaP$ is at most $4$ for arbitrary items, and at most $2$ for zero size items. 
\end{corollary}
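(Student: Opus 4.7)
The plan is to combine the bound on the number $k$ of pseudo-bins obtained in the preceding corollary with a standard Next-Fit style size argument applied within each pseudo-bin separately. First, I would invoke the preceding corollary together with Lemma \ref{LBone}: since $LB_1 \geq (k+1)/2$ and $OPT \geq LB_1$, the algorithm creates at most $k \leq 2 \cdot OPT - 1$ pseudo-bins.

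Next, I would bound $n_j$, the number of actual bins used inside pseudo-bin $P_j$. By the definition of $BaP$, an item is assigned to a pseudo-bin only if that pseudo-bin's color differs from the item's color, so within a pseudo-bin the color constraint never by itself forces opening a new bin; a new bin $B^j_{l+1}$ is opened only when the incoming item does not fit into $B^j_{l}$. This gives $\mathrm{size}(B^j_l)+\mathrm{size}(B^j_{l+1})>1$ for every $l$, because the size of the first item of $B^j_{l+1}$ together with the total size already in $B^j_l$ exceeds $1$. Pairing consecutive bins of $P_j$ as $(B^j_1,B^j_2),(B^j_3,B^j_4),\ldots$ then yields $S_j > (n_j-1)/2$, where $S_j$ denotes the total item size assigned to $P_j$, hence $n_j \leq 2S_j + 1$.

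Summing this over all pseudo-bins and using $\sum_j S_j = LB_0 \leq OPT$, the total number of bins that $BaP$ uses is at most $\sum_{j=1}^k n_j \leq 2\,LB_0 + k \leq 2\cdot OPT + (2\cdot OPT - 1) < 4 \cdot OPT$, establishing the first part of the statement. For the zero-size case, $LB_0=0$ and no item ever fails to fit inside its assigned pseudo-bin, so $n_j=1$ for every $j$ and the total bin count is exactly $k \leq 2\cdot OPT - 1 \leq 2\cdot OPT$, giving the ratio $2$.

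I do not expect a genuine obstacle here, since the heavy lifting has been done in the preceding theorem and corollary. The only point that requires care is the observation, used crucially above, that within a pseudo-bin the only reason to open a new bin is a size overflow and never a color clash; this is what legitimizes the Next-Fit style pairing argument and lets the size bound and the color/pseudo-bin bound be combined additively rather than multiplicatively.
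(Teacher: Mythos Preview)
Your proposal is correct and follows essentially the same argument as the paper: both combine the pseudo-bin bound $k\le 2\,LB_1-1\le 2\cdot OPT-1$ from the preceding corollary with the Next-Fit pairing inside each pseudo-bin to obtain $BaP\le 2\,LB_0+k\le 4\cdot OPT-1$, and both treat the zero-size case by noting that then each pseudo-bin is a single bin. The only cosmetic point is that your strict inequality $S_j>(n_j-1)/2$ need not be strict when $n_j=1$ and $S_j=0$, but the non-strict version you actually use, $n_j\le 2S_j+1$, holds in all cases and is exactly what the paper uses as well.
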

\begin{proof}
For zero size items, $BaP$ produces exactly $k \geq 1$ bins. We find that $k \leq 2LB_1-1 <2\cdot OPT$.
Consider an input consisting of arbitrary items. Every two consecutive bins resulting from one pseudo-bin of $BaP$ have a total size of items that exceeds $1$. Thus, for a pseudo-bin that results in $x$ bins, the total size of items is above $\lfloor   \frac x 2 \rfloor \geq \frac{x-1}2$. We find that the total size of items is at least $\frac{BaP}2-\frac k2$. Thus, $LB_0 \geq \frac{BaP}2-\frac k2$ while $LB_1 \geq \frac{k+1}2$.
We find that $BaP \leq 2LB_0+k \leq 2LB_0+2LB_1-1 < 4\cdot OPT$.
\end{proof}

We can show that the analysis of $BaP$ is tight.
\begin{proposition}
The asymptotic competitive ratio of $BaP$ is at least $2$ for zero size items, and at least $4$ for arbitrary items.
\end{proposition}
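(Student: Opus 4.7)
My plan is to establish both lower bounds by exhibiting families of adversarial instances.

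For the zero-size bound, the idea is to construct, for each integer $k$, an input $I_k$ using three colours on which $BaP$ opens exactly $k$ pseudo-bins while $OPT$ uses $\lceil (k+1)/2\rceil + O(1)$ bins. The input is built phase by phase. Starting from a state of $m$ pseudo-bins all of colour $X_m$, I append a short block of items that drives $BaP$ through the extremal configuration of case~(b) of Lemma~6 for even $m$: namely, a configuration where exactly $m/2$ pseudo-bins have the next opening colour $X_{m+1}$ and $m/2$ pseudo-bins have other colours, and then brings $BaP$ back to a monochromatic state of colour $X_{m+1}$. A final item of colour $X_{m+1}$ then opens $P_{m+1}$. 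Because the inductive step that this block triggers is precisely the equality case of Lemma~6, each phase contributes exactly $\tfrac12$ to $LB_1$, giving $LB_1 = (k+1)/2 + O(1)$ for $I_k$. I then exhibit an explicit two-layer packing of $I_k$ that uses only $\lceil(k+1)/2\rceil + O(1)$ bins, which shows $OPT(I_k) = LB_1 + O(1)$. The ratio $BaP/OPT$ therefore tends to $2$ as $k\to\infty$.

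For the arbitrary-items bound, I lift the zero-size construction by pairing each item $i$ of $I_k$ with a companion item whose size is $\tfrac12+\eps$ and whose colour is chosen to be compatible with the alternation inside the pseudo-bin to which $BaP$ assigns $i$. Since the pseudo-bin partition of $BaP$ is governed only by colours (as stressed in the analysis), the pseudo-bin sequence is unchanged. Within each pseudo-bin, however, every companion item forces $NF$ to open its own bin (two companions together exceed $1$), so the number of bins inside each pseudo-bin roughly doubles and $BaP$ uses about $2k$ bins in total. Meanwhile $OPT$, knowing the whole input, can absorb the companions into the two-layer packing from the zero-size part without opening bins beyond the $\lceil(k+1)/2\rceil+O(1)$ already used there, because each companion fits in a slot of size $\ge \tfrac12$ available after one item of the corresponding layer. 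Combining the factor $2$ from colours with the factor $2$ from sizes gives $BaP/OPT \to 4$.

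The main obstacle is the zero-size part: in colourful bin packing with three colours $OPT$ can strictly exceed $LB_1$, so the $k/((k+1)/2)\to 2$ bound is not automatic and a matching packing of $I_k$ must be exhibited by hand, which requires a careful combinatorial argument respecting the order-preserving constraint. Once that packing is in place, the arbitrary-items extension is essentially routine: one only has to verify that inserting the companion items neither alters $BaP$'s pseudo-bin structure nor forces $OPT$ to open any bin beyond the ones already accounted for.
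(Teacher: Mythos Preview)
Your plan for the zero-size case is too sketchy to assess: you never write down the block of items appended in phase $m$, and you explicitly flag the decisive step---showing $OPT(I_k)\le (k+1)/2+O(1)$---as an unresolved obstacle. The paper does not try to make $BaP$ open one pseudo-bin per phase; instead it fixes $OPT=M$ by starting with $M$ white items and then, in each phase, feeds a carefully chosen mixture of red and blue items followed by $M$ whites so that the number of all-white pseudo-bins grows according to the recursion $a_{i+1}=(3a_i+2)/4$, hence tends to $2M$. That construction yields the packing for $OPT$ for free (one red-or-blue and one white item per bin per phase), avoiding precisely the obstacle you identify.

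Your lifting to arbitrary sizes, however, contains a genuine error. Any input on which $BaP$ opens $k$ pseudo-bins while $OPT$ uses about $k/2$ bins must have far more than $\Theta(k)$ items: to bring $m$ pseudo-bins from a monochromatic state back to a monochromatic state of the same colour (so that the next item opens $P_{m+1}$) requires $\Omega(m)$ items, so $I_k$ has $\Omega(k^2)$ items in total. If you attach to \emph{each} item a companion of size $\tfrac12+\eps$, then any packing needs a separate bin for every companion, so $OPT$ jumps from $\Theta(k)$ to $\Omega(k^2)$ and the ratio collapses. (Your claim that companions can be ``absorbed into the two-layer packing'' cannot hold: a bin holds at most one item of size larger than $\tfrac12$.) The paper avoids this by adding only $O(M)$ large items after the zero-size construction: it first gives the $\approx 2M$ pseudo-bins pairwise distinct colours using tiny items of many new colours, then sends $M-1$ items of size $1-\eps$ of a single fresh colour, and finally another batch of tiny items of new colours. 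Each large item forces its pseudo-bin to split, but $OPT$ absorbs all of them into the $M$ existing bins. The count of large items is tied to $OPT$, not to the total number of items, which is exactly what your companion-per-item scheme gets wrong.
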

\begin{proof}
We will use the following parameters. Let $N\geq 2$ be a large integer. Let $M=4^{N+1}$, let $a_1=1$, and for $i>1$, let $a_i=(3a_{i-1}+2)/4$.
\begin{lemma}\label{simprop} We have $1\leq a_i<2$, $a_i>a_{i-1}$ for all $i$, and $\lim_{i \rightarrow \infty} a_i =2$. Moreover, $a_i=2-(3/4)^{i-1}$ holds.
\end{lemma}
\begin{proof}
We prove the first part by induction. It holds for $i=1$. Assume that it holds for $i-1$ (for some $i>1$). We have that$(3a_{i-1}+2)/4 \geq 1$ holds since $a_{i-1} \geq 1$, and  $(3a_{i-1}+2)/4 < 2$ holds since $a_{i-1}<2$.
For the second part, we find $4a_i=3a_{i-1}+2$, or equivalently $2>a_i=2-3(a_i-a_{i-1})$, that is, $a_i>a_{i-1}$.
Let $b_i=2-a_i$. We have $4(2-b_i)=3(2-b_{i-1})+2$, or equivalently, $b_i=3b_{i-1}/4$, and $b_i=(3/4)^{i-1}$ since $b_1=1$. Therefore $a_i=2-(3/4)^{i-1}$, and since the sequence $b_j$ tends to zero as $j$ tends to infinity, $a_j$ tends to $2$.
\end{proof}

We start with an input of zero size items. In this input all items
are white, red, or blue. The input consists of the following $N+1$
phases. In phase $0$, $M$ white items arrive. In phase $i$ (for $1
\leq i \leq N$), $a_i\cdot M/2$ red items arrive, and then
$(1-a_i/2)M$ blue items arrive. We find $a_i\cdot
M/2=(2-(3/4)^{i-1})4^{N+1}/2=2(4^{N}-3^{i-1}\cdot4^{N-i+1})$, and
$(1-a_i/2)M=2\cdot 4^N-2\cdot4^{N}+2\cdot3^{i-1}\cdot4^{N-i+1}$.
The numbers of red and blue items are even integers in $(0,M)$,
and their sum is $M$. Phase $i$ ends with the arrival of $M$ white
items. We have $OPT=M$. Obviously, $M$ bins are needed already for
the first $M$ white items. Each bin of the optimal solution
receives one white item in phase $0$, and in each additional phase
it receives one red item or one blue item, and additionally one
white item.

\begin{lemma} After $i$ phases $BaP$ has $a_{i+1}M$ pseudo-bins, all of which are white. \end{lemma}
\begin{proof} By induction. This holds for $i=0$.
Assume that it holds after phase $i-1$. In phase $i$, first the red items are assigned to distinct pseudo-bins, and now there are $a_i\cdot M/2$ red pseudo-bins and $a_i\cdot M/2$ white pseudo-bins. Now the blue items are packed such that half of them join red pseudo-bins and half join white pseudo-bins. The number of white pseudo-bins is now $a_i\cdot M/2-(1-a_i/2)M/2=M/4(3a_i-2)$. The number of pseudo-bins that are either red or blue is now $a_i\cdot M/2+(1-a_i/2)M/2=M(a_i+2)/4$. Note that $(a_i+2)/4<1$ since $a_i<2$. The $M$ white items can join $M/4(a_i+2)$ pseudo-bins that are either red or blue, and the remaining $M-M/4(a_i+2)$ items cause the opening of new white pseudo-bins. The total number of pseudo-bins now is $a_i\cdot M+(M-M(a_i+2)/4)$ and they are all white. The last number is equal to $M(a_i+1-a_i/4-1/2)=M(3a_i+2)/4=M\cdot a_{i+1}$.\end{proof}

We find that  after $N+1$ phases, the algorithm has $(2-(3/4)^{N}) \cdot M$ pseudo-bins, each consisting of one bin, which implies the lower bound.

In order to prove that the asymptotic competitive ratio is at least $4$ for arbitrary item sizes, we start with presenting the input above to $BaP$. At this time, all items are of three colors and have zero sizes, $OPT=M$, the algorithm has $2M-m$ pseudo-bins where $m=(\frac 34)^N M$. The input continues as follows (we ensure that $OPT=M$ will hold for the complete input). There are $2M-m-1$ items, all of different new colors (none of these colors is white or red or blue). Moreover, we reserve the color black for later, and thus we require that none of these colors is black. Each of these items has size $2\eps$ (for some $\eps<1/(8M)$). $OPT$ will use one bin for items of size $2\eps$, while $BaP$ will assign each item to a different pseudo-bin. Now all the bins of $BaP$ have different colors (one pseudo-bin remains white). Next, $M-1$ black items arrive, where each item has size $1-\eps$. $OPT$ adds them to its white bins, the algorithm assigns at most one item to a white pseudo-bin, so at least $M-2$ items are assigned to different pseudo-bins whose color was not white, red, blue, or black (and the last item assigned to this pseudo-bin had size $2\eps$). Thus, there are at least $M-1$ black pseudo-bins, and at least $M-2$ of them consist of two bins each, as the total size of items assigned to it is above $1$. Next, there are $M-2$ items all of different and new colors and sizes of $2\eps$. $OPT$ packs them into the bin that already has items of this size, while the algorithm adds them to its black pseudo-bins, and at least $M-3$ pseudo-bins now consist of three bins.
The algorithm will have at least $2M-m+(M-2)+(M-3)=4M-m$ bins, while $OPT=M$. The competitive ratio approaches $4$ for a sufficiently large value of $N$.

Note that this example does not require any assumptions regarding the behavior of $BaP$ in cases of ties.
The example requires, however, a large number of different colors. We provide a different example that is valid for a run of $BaP$ where ties between pseudo-bins of one color are broken in favor of smaller indices, and $\C=\{white,red,blue\}$.
Once again, the input starts with the items of zero size as above.  Afterwards, there are three batches of items, consisting of $M$ blue items, $M$ white items, and $M$ blue items, respectively, of sizes that we will define. Since the number of pseudo-bins is above $M$ and all of them are white, blue items must join white pseudo-bins, and white items must join blue pseudo-bins.  The three batches are packed into the first $M$ pseudo-bins, where the $j$th item of a batch is packed into the pseudo-bin of index $j$.
For $1 \leq t \leq M+1$, let
$\del_{t}=\eps/4^{t}$ (thus we have $\del_{t+1}=\del_t/4)$.
The size of the $t$th item in the first batch (of blue items) is $\del_t$ ($t=1,...,M$).
The size of the $t$th item in the second batch (of white items) is $1-3\del_{t+1}$ ($t=1,...,M$).
The size of the $t$th item in the third batch (of blue items) is $\del_t$ ($t=1,...,M$).
We have $\del_t+(1-3\cdot\del_{i+1})>1$ since
$\del_t-3\cdot\del_{t+1}=\del_t/4$. Therefore, each pseudo-bin $t=1,\ldots,M$ consists of three bins.

We show that for this input $OPT \leq M+2$. Given the packing into $M$ white bins, for $t=1,...,M-1$
we group the items of sizes $\del_t, 1-3\cdot\del_t,
\del_t$ (of colors blue, white, and blue, respectively) and pack them into $M-1$ bins. A blue item of size $\del_{M}$ is added to the remaining bin, and the two items of sizes $\del_M$ and $1-3\cdot\del_{M+1}$ are packed into new bins.
\end{proof}

\section{Lower bounds}
\label{lbs}

The (absolute or asymptotic) competitive ratio cannot decrease if the cardinality of $\C$ grows. Thus, when we claim a negative result for $|\C| \geq \ell$, it is sufficient to prove it for $|\C|=\ell$. Thus, the lower bound for arbitrary items is proved for $|\C|=2$, and the lower bound for zero size items is proved for $|\C|=3$.

\subsection{An asymptotic lower bound of 2}
We will consider an algorithm, and construct an input consisting of black and white items based on its behavior. The construction is carried out in phases, where in each phase the algorithm has to pack a black item after a white item. If they are packed together, it turns out that it would have been better to pack this last black item separately, since another smaller black item arrives, and a large white item that should have been combined with the first black item of this phase. Since no other combination is possible, the algorithm has two new bins instead of just one. If the algorithm uses a new bin for the first black item, it turns out that the phase ends, and the algorithm used a new bin when this was not necessary. The first situation is slightly better for the algorithm, and a ratio of $2$ will follow from that. The precise construction is presented in the proof of the following theorem.

\begin{theorem}
The asymptotic competitive ratio of any algorithm for colorful bin packing is at least $2$.
\end{theorem}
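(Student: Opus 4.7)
The plan is to construct, for any given online algorithm $\mathrm{ALG}$, an input of black and white items adaptively based on $\mathrm{ALG}$'s decisions, organized in phases. Each phase will be designed so that, regardless of how $\mathrm{ALG}$ responds, $\mathrm{ALG}$ adds at least two new bins while $\mathrm{OPT}$ adds only one.

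Each phase begins with the adversary presenting a white item $w$ and then a black item $b$, with sizes chosen so that $|w|+|b|\le 1$ (so that $\mathrm{ALG}$ has the option to combine them). The continuation depends on $\mathrm{ALG}$'s handling of $b$. If $\mathrm{ALG}$ places $b$ in a bin different from the one holding $w$ (the \emph{separate} case), the phase ends: $\mathrm{ALG}$ has used two new bins for only two items, while $\mathrm{OPT}$ pairs $w$ and $b$ in a single new bin. If $\mathrm{ALG}$ places $b$ together with $w$ (the \emph{combine} case), the adversary continues with a small black item $b'$ and a large white item $w'$, the sizes arranged so that $|b|+|w'|\le 1$ while $|w|+|b|+|w'|>1$. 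The bin holding $w,b$ has $b$ on top, so $b'$ is blocked by color, and $w'$ by size; both must then go into a second new bin this phase. Thus $\mathrm{ALG}$ opens at least two new bins in a combine phase, while $\mathrm{OPT}$ opens exactly one new bin (using it to pair $b$ with $w'$, filling it to size $1$) and deposits $w$ and $b'$ into ``absorbing'' bins it has been maintaining across phases.

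To make $\mathrm{OPT}$'s absorbing strategy work over many phases, the sizes of $w$ and $b'$ are taken to be of order $1/N$, where $N$ is the total number of phases, so that the total volume deposited into $\mathrm{OPT}$'s absorbing bins stays bounded by a constant. A pair of absorbing bins (one ending in white, one ending in black) will suffice: after each combine phase, depositing $w$ (white) into the black-ending bin and $b'$ (black) into the white-ending bin swaps their roles and preserves the invariant. Summing over $N$ phases gives $\mathrm{ALG}\ge 2N$ while $\mathrm{OPT}\le N+O(1)$, so the ratio tends to $2$ as $N\to\infty$.

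The hardest part is to show that $\mathrm{ALG}$, even if it tries to emulate $\mathrm{OPT}$'s absorbing strategy on its own, cannot avoid opening two bins per combine phase. The online commitment to combining $w$ with $b$ locks a bin whose top item is $b$; the constraint $|w|+|b|+|w'|>1$ then forbids depositing $w'$ there, and the adversary must arrange that any previously opened bin of $\mathrm{ALG}$ whose top is white is either already full or, because $w'$ is large, cannot accommodate it. Carving out the narrow window of sizes that simultaneously supports $\mathrm{OPT}$'s absorbing strategy and frustrates every online heuristic of $\mathrm{ALG}$, and meshing the combine and separate cases so that the algorithm cannot profit from mixing its responses, is the main technical obstacle.
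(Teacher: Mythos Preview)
Your proposal has the right high-level shape and is close in spirit to the paper's construction, but there is a genuine gap in the separate case. You assert that if $\mathrm{ALG}$ places $b$ in a bin different from the one holding $w$, then ``$\mathrm{ALG}$ has used two new bins for only two items.'' This is not true as stated: $\mathrm{ALG}$ may put $w$ into a previously opened black-topped bin and $b$ into a previously opened white-topped bin, opening zero new bins. In fact $\mathrm{ALG}$ can maintain its own pair of absorbing bins and, by always choosing the separate branch, route every phase's $w,b$ into them and never open a new bin after the first phase. Your last paragraph flags the analogous difficulty only for the combine branch; the separate branch is at least as problematic, and your dichotomy (``$b$ with $w$'' versus ``$b$ elsewhere'') does not control it.

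The paper's construction repairs this with three changes. First, the dichotomy is ``$b$ into a \emph{new} bin'' versus ``$b$ into \emph{some} existing white-topped bin''; the first branch then trivially charges one new bin. Second, sizes decrease geometrically across phases ($\delta_i=1/(5^i N^3)$), so the huge white item of size $1-2\delta_i$ cannot join any black-topped bin from an earlier phase (every earlier black item has size at least $\delta_{i-1}>2\delta_i$), and in the combine branch the only compatible black item $\delta_i$ already sits on top of a white item of size $\eps>2\delta_i$; hence the huge white item must go to a fresh bin. Third, the accounting is not ``two new bins per phase'' but a two-track argument with two stopping rules: the paper tracks (i) the total bin count, which increases by at least one per phase, and (ii) the number of \emph{black} bins, which increases by two each time the combine branch fires (via the extra special black item and the trailing regular black item). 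If the separate branch fires $N^2$ times the first track already gives $N^2$ bins against $\mathrm{OPT}\le N+1$; otherwise the combine branch fires $N$ times and the black-bin track gives at least $2N+1$ bins against $\mathrm{OPT}\le N+1$. Some mechanism of this kind is needed; the uniform ``$2$ versus $1$ per phase'' bookkeeping you propose cannot be pushed through with both $w$ and $b'$ of size $O(1/N)$.
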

\begin{proof}
Consider an online algorithm $A$. Let $N>3$ be a large integer. Let $\eps=\frac 1{N^3}$, and
$\del_i=\frac 1{5^i \cdot N^3}$ for $1 \leq i \leq N^2$. Let $\C= \{\mbox{ black, white}\}$.
The list of items will consist of white items called {\it regular white items}, each of size $\eps$,
white items called {\it huge white items}, whose sizes are either of the form $1-2\del_i$ (for
some $1 \leq i \leq N^2$) or $1$, black items called {\it special black items},
whose sizes are of the form $3\del_i$, and black items called {\it regular black items}
whose sizes are of the form $\del_i$.

The list is created as follows. An index $i$ is used for the
number of regular white items that have arrived so far (each such item is
followed by a regular black item). An index $j$ is used for the
number of huge white items that have arrived so far (each such item is
preceded by a black item and followed by a black item). The input stops when
one of $i=N^2$ and $j=N$ happens (even if the second event did not happen). Let $i=0$ and $j=0$. 

\noindent {\bf 1.} If $j=N$, then stop. Else, if $i=N^2$, then $N-j$ huge white items of size $1$ each
arrive; stop.

\noindent {\bf 2.} Let $i=i+1$; a regular white item arrives; a regular
black item of size $\del_i$ arrives.

\noindent {\bf 3.} If the last black item is packed into a new bin, the phase ends. Go to step 1 to start a new phase. 

\noindent {\bf 4.} Else, it must be the case that the last black item is packed into a bin where the
last item is white. Let $j=j+1$, a special black item of size
$3\del_i$ arrives, then a huge white item of size $1-2\del_i$ arrives,
and finally, a regular black item of size $\del_i$ arrives, and the phase ends. Go to step 1 to start a new phase.

%
%
%
%
%

\begin{lemma}\label{easyp}
Any huge white item is strictly larger than $1-\eps$. Any black
item is strictly smaller than $\eps$. The total size of a huge white item of phase $i$ and a black item of an earlier phase is above $1$.
\end{lemma}
\begin{proof}
The largest black item can be of size $3\del_1 <\eps$. The
smallest huge white item can have the size $1-2\del_1>1-\eps$. Finally consider a huge white item of phase $i_2$ and a black item of phase $i_1<i_2$. The size of the white item is $1-2\del_{i_2}$, and the size of the black item is at least $\del_{i_1}$. We have
$(1-2\del_{i_2})+\del_{i_1}=1+\frac{1}{N^3}(\frac{1}{5^{i_1}}-\frac{2}{5^{i_2}}) \geq 1+\frac{1}{N^3}(\frac{1}{5^{i_1}}-\frac{2}{5^{i_1+1}})>1$.
\end{proof}

\begin{lemma}
$N \leq OPT \leq N+1$.
\end{lemma}
\begin{proof}
There are $N$ huge white items, each of size above $\frac 12$,
thus, since a pair of such items cannot be packed into a bin together even with a black item, $OPT \geq N$.
We create a packing with $N+1$ bins as follows. If there are huge
white items of size $1$, each such item is packed into a separate
bin. We show how the remaining items can be packed into $j$ bins (where $j$ is the final value of the variable $j$).
Every remaining huge white item is packed in a bin with the last
regular black item that arrived before it, and the regular black
item that arrived after it. The total size of such three items of phase $i$ is $1$.
This leaves a sequence of items of
alternating colors, where some of the black items are special. The white items in the remaining input are regular, and the black item of phase $i$ has a size of either $\del_i$ or $3\del_i$. In
this sequence, every item is no larger than $\eps$, and there are
$2i \leq 2N^2$ items (where $i$ is the final value of this variable). Thus, the total size of these items is below
$1$, and they are all packed into a single bin.
\end{proof}

\begin{lemma}\label{howmany}
The number of bins used by the algorithm up to a time when $i=i'$
is at least $i'$. The number of black bins at a time when $j=j'$
is at least $2j'+1$.
\end{lemma}
\begin{proof}
In a step where $i$ increases but $j$ does not increase, the black
regular item is packed into a new bin. In a step where both $i$
and $j$ increase, the huge white item must be packed into a new
bin as the only black item that arrived so far and fits into a bin
with this last white item is the last regular black item (since black items of earlier phases are too large, and the last special black item has size $3\del_i$), but this
item was packed into a bin that already has a white item, so its total size of items is above $\eps$, and the huge white item cannot be packed
there. This proves the first claim, since in both cases at least one new bin is used.

The second claim is proved via induction. First note that when a pair of a regular white
item and a regular black item arrive, the number of black bins cannot
decrease (no matter if they are packed into the same bin or not).
Moreover, when a huge white item of size $1$ arrives, it cannot be
packed into a non-empty bin as all item sizes are strictly
positive, so it cannot change the number of black bins either.
Consider the packing as long as $j=0$. After step 3 was applied once or more, there is at least one black bin that contains the last black regular item. Each time that
$j$ increases, since the huge white item is packed into a previously empty
bin, the two items arriving just before and just after the white
item (the special item and the regular item) increase the
number of black bins by $2$, since the special item is either
packed into a new bin or into a white bin, the huge white bin does
not change the status of a previously non-empty bin, so its
packing does not change the number of black bins, and the regular
item also increases the number of black bins by $1$.\end{proof}

For a fixed value of $N$, if the input was terminated since
$i=N^2$ but $j<N$, then the cost of the algorithm is at least
$N^2+N-j \geq N^2+1$. As $OPT \leq N+1$, we find a competitive
above $N-1 > 2$. If $j=N$, then the cost of the algorithm is at
least $2N+1$ (as this is a lower bound on the number of black
bins), while $OPT \leq N+1$, and we find a ratio of at least
$2-\frac{1}{N+1}$. We found that for any $N>3$, there is an input
where $OPT\geq N$, and the competitive ratio for this input is at
least $2-\frac{1}{N+1}$. This implies the claim.
\end{proof}

\subsection{A lower bound for zero size items}
It was shown in \cite{waoa12} that if all items have zero sizes, then the algorithm Pseudo finds an optimal solution (that is, its absolute competitive ratio is $1$). Our analysis of $BaP$ implies that its absolute and asymptotic competitive ratios for zero size items are equal to $2$. Here, we show that there cannot be an online algorithm for colorful bin packing with at least three colors and zero size items that produces an optimal solution (a solution that uses the minimum number of bins).

\begin{theorem}
Any algorithm for zero size items with $|\C|\geq 3$ has an asymptotic competitive ratio of at least $\frac 32$.
\end{theorem}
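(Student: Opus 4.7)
The plan is to construct an input adaptively against any deterministic online algorithm $A$, restricting to $\C=\{R,G,B\}$ with zero-size items. For every large $N$, we will produce an input $I_N$ with $\mathrm{OPT}(I_N)\to\infty$ such that $A(I_N)/\mathrm{OPT}(I_N)\geq 3/2-o(1)$. The key intuition is that, with three colors, whenever $A$ has bins ending in two distinct colors $c',c''$, items of the third color force $A$ into a binary choice that the adversary can punish; the adversary will always continue the input in the direction that makes the chosen bin-top the ``wrong'' one.

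The basic building block is a six-item \emph{trap} that, in isolation, forces $A$ to use 3 bins while $\mathrm{OPT}$ uses 2. The adversary presents $R\,R\,G\,B$. After these four items, $A$ has two bins of different tops; depending on where $A$ placed $B$ (on the $G$-topped bin or on an $R$-topped bin), the adversary finishes with $R\,R$ or with $G\,G$, respectively. A short case check shows that in either continuation $A$ must open a third bin, while $\mathrm{OPT}$ can pack the six items into the two bins $[R,G,R\text{ or }G]$ and $[R,B,R\text{ or }G]$ of an explicit offline solution. Crucially, in both cases $A$ ends with three bins whose tops are all the same color $X\in\{R,G\}$; this monochromatic ``base state'' is what makes traps chainable.

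To chain traps, the adversary enters phase $i$ with $A$'s state consisting of $3(i-1)$ bins all ending in some color $X_i$. It then replays the trap template with $X_i$ in the role of $R$: first two items of color $X_i$ (opening two new $A$-bins since every existing bin ends in $X_i$), then one item of each other color, then two more items chosen adaptively in the same way as above to force the third new $A$-bin. The offline packing of the whole input simply interleaves all phases into two global ``alternating'' bins of the form $X_iY_iX_iY_i\cdots$ plus a small number of ``leftover'' bins; assembling the per-phase packings phase by phase gives $\mathrm{OPT}(I_N)\leq 2N+O(1)$, while the per-phase increment for $A$ is $3$, yielding $A(I_N)\geq 3N-O(1)$ and hence the desired asymptotic ratio $3/2$.

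The main obstacle is the subcase in which $A$ plays the ``spreading'' choice inside a phase (placing the third-color item so that the phase ends with more than one color among the newly created bins); a naive continuation then forces the adversary to present extra items of one color, which inflates $\mathrm{OPT}$ via the lower bound in Lemma~\ref{LBone} and degrades the per-phase ratio below $3/2$. To overcome this, the adversary's follow-up is chosen as the color of the single ``odd'' bin top, so that one additional item suffices to restore the monochromatic base state, and an amortized charging argument pays for the extra item using the slack from subsequent phases. The heart of the proof is the verification, by induction on phases together with an explicit exhibition of an offline $2N+O(1)$-bin packing, that this amortized bookkeeping survives every adaptive trajectory of $A$.
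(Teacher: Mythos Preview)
Your trap forces three bins in phase $1$ but does not chain. The claim that the two adaptive items ``force the third new $A$-bin'' in later phases is false. Enter phase $i\ge 2$ with $3(i-1)$ bins all of color $X_i$ as you assume; after $X_iX_iG'B'$ there are $3i-1$ bins. If $A$ places $B'$ on an $X_i$-topped bin (your case 2), the tops are one $G'$, one $B'$, and $3i-3\ge 3$ copies of $X_i$. Your reply $G'G'$ is then absorbed: $A$ drops both $G'$ items on $X_i$-bins and opens nothing new. So in this branch $A$ gains only the two forced $X_i$-bins per phase; worse, the phase ends with tops of all three colors, so there is no ``single odd bin top'' for your one-item fix-up, and the monochromatic invariant you need for phase $i+1$ is gone. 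The amortized argument you allude to cannot repair this: when $A$ plays case 2 in every phase $i\ge 2$, each phase adds $2$ to $A$ and $2$ to $\mathrm{OPT}$, leaving zero slack anywhere to amortize against. The obstacle you name (the ``spreading'' subcase where $A$ opens \emph{extra} bins) is not the real problem; the real problem is that $A$ can open \emph{fewer} bins by exploiting the reservoir of $X_i$-bins inherited from earlier phases.

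The paper avoids this by using batches of size $M$ rather than constant-size gadgets. Each phase sends $2M$ items alternating between the two colors other than $G(i-1)$, then inspects $A$'s bins and picks the plurality top color $G(i)$: by pigeonhole at least $N_{i-1}/3$ of $A$'s bins already have top $G(i)$, and $M$ further $G(i)$-items raise this count to at least $N_{i-1}/3+M$ regardless of what $A$ does (each such item either opens a new bin or converts a non-$G(i)$ bin). The recurrence $N_i\ge N_{i-1}/3+M$ converges to $3M/2$, while an explicit offline packing into $M$ regular bins plus three special bins shows $\mathrm{OPT}\le M+3$. The batch size being comparable to the current number of bins is precisely what prevents $A$ from absorbing the final $M$ items; a six-item template cannot do this once $A$ has accumulated many bins.
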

\begin{proof}
We will use $\C=\{\mbox{white, red, blue}\}$. Recall that all items have zero sizes, thus for every presented item we only specify its color. Let $M \geq 2$ be a large integer. We construct an input for which $M \leq OPT \leq M+3$. The input starts with phase $0$ that consists of $M$ white items. Thus, $OPT \geq M$. The remainder of the input is presented in phases. In parallel to presenting the input, we will create a packing $\pi$ for the complete input. This  packing will consist of $M+3$ bins. The $M$ items of phase $0$ are packed in $\pi$ into $M$ bins called regular bins. In addition to the $M$ regular bins of $\pi$, there will be a special bin of each color in $\pi$ (this bin is empty after phase $0$). The regular bins of $\pi$ ($M$ bins in total), will always be of one color (this color can be any of the three colors).
Each phase $i$ will have a color $G(i)$ associated with it. This is the color of the $M$ regular bins of $\pi$. The color associated with phase $0$ is white.

Phase $i$ is defined as follows. Let $c_i$ and $c'_i$ be the two colors that are not the color associated with phase $i-1$ (i.e., $c_i,c_{i'}\in \C\setminus \{G(i-1)\}$, $c_i\neq c_{i'}$.
There are $2M$ items of alternating colors; the items of odd indices are of color $c_i$, and the items of even indices are of color $c'_i$.
Let $W_{i}$, $R_{i}$, and $B_{i}$, be the numbers of white, red, and blue bins, that the algorithm has after the last $2M$ items have arrived. Phase $i$ ends with $M$ items of the color for which the number of bins of the algorithm is maximal after the $2M$ first items of phase $i$ have been packed by the algorithm (that is, letting $X=\max\{W_i,R_i,B_i\}$, the last $M$ items are white if $X=W_i$, otherwise, if $X=R_i$, then they are red, and otherwise they are blue). Let $G(i)$ be the color of the last $M$ items of phase $i$.

Let $N_i$ be the number of bins of the algorithm after phase $i$.
We have $N_0=M$. In phase $i \geq 1$ the algorithm obviously has at least $N_{i-1}$ bins after the first $2M$ items of phase $i$ have arrived, and there are at least $\frac{N_{i-1}}3$ bins of color $G(i)$. Therefore, after $M$ items of color $G(i)$ arrive, the algorithm has $M$ additional bins of color $G(i)$, and there are at least $\frac{N_{i-1}}{3}+M$ bins of color $G(i)$. We get $N_{i} \geq \frac{N_{i-1}}{3}+M$. Thus, $N_i \geq M \cdot \frac{3^{i+1}-1}{2\cdot 3^i}$. This holds for $i=0$ as $N_0=M$, and $\frac{3^1-1}{2\cdot 3^0}=1$, and using the recurrence, $N_{i+1} \geq (\frac{3^{i+1}-1}{2\cdot 3^{i}})M/3+M=(\frac{3^{i+2}-1}{2\cdot 3^{i+1}})M$.

Due to symmetry, we describe the packing $\pi$ for the case that the color associated with phase $i-1$ is white, and the first $2M$ items of phase $i$ alternate between red and blue (starting with red). If the last $M$ items of phase $i$ are blue or red, then the first $2M$ items are packed into the blue special bin (which remains blue), and the last $M$ items are packed into the $M$ regular bins.  If the last $M$ items are white, each bin receives a red item and an blue item. Now all regular bins are blue, and the last $M$ white items can be packed into them. The color associated with phase $i$ is indeed $G(i)$.

We find that the competitive ratio of the algorithm is at least $\frac{M}{M+3} \cdot \frac{3^{i+1}-1}{2\cdot 3^i}$. Letting $M$ and $i$ grow without bound we find a lower bound of $\frac 32$ on the asymptotic competitive ratio.
\end{proof}

\bibliographystyle{abbrv}

\begin{thebibliography}{10}

\bibitem{waoa12}
J.~Balogh, J.~B{\'e}k{\'e}si, G.~D{\'o}sa, L.~Epstein, H.~Kellerer, and
  Z.~Tuza.
\newblock Online results for black and white bin packing.
\newblock {\em Theory of Computing Systems}.
\newblock To appear.

\bibitem{BBG}
J.~Balogh, J.~B{\'e}k{\'e}si, and G.~Galambos.
\newblock New lower bounds for certain classes of bin packing algorithms.
\newblock {\em Theoretical Computer Science}, 440-441:1--13, 2012.

\bibitem{DS12}
G.~D\'osa and J.~Sgall.
\newblock First fit bin packing: A tight analysis.
\newblock In {\em Proc. of the 30th International Symposium on Theoretical
  Aspects of Computer Science (STACS2013)}, pages 538--549, 2013.

\bibitem{DTY}
G.~D{\'o}sa, Z.~Tuza, and D.~Ye.
\newblock Bin packing with ``largest in bottom" constraint: tighter bounds and
  generalizations.
\newblock {\em Journal of Combinatorial Optimization}, 26(3):416--436, 2013.

\bibitem{epstein2009online}
L.~Epstein.
\newblock On online bin packing with {LIB} constraints.
\newblock {\em Naval Research Logistics}, 56(8):780--786, 2009.

\bibitem{FM}
L.~Finlay and P.~Manyem.
\newblock Online {LIB} problems: Heuristics for bin covering and lower bounds
  for bin packing.
\newblock {\em RAIRO Operetions Research}, 39(3):163--183, 2005.

\bibitem{John}
D.~S. Johnson.
\newblock {\em Near-optimal bin packing algorithms}.
\newblock PhD thesis, MIT, Cambridge, MA, 1973.

\bibitem{J74}
D.~S. Johnson.
\newblock Fast algorithms for bin packing.
\newblock {\em Journal of Computer and System Sciences}, 8(3):272--314, 1974.

\bibitem{johnson1974worst}
D.~S. Johnson, A.~Demers, J.~D. Ullman, M.~R. Garey, and R.~L. Graham.
\newblock Worst-case performance bounds for simple one-dimensional packing
  algorithms.
\newblock {\em SIAM Journal on Computing}, 3:256--278, 1974.

\bibitem{LeeLee}
C.~C. Lee and D.~T. Lee.
\newblock A simple online bin packing algorithm.
\newblock {\em Journal of the ACM}, 32(3):562--572, 1985.

\bibitem{M02}
P.~Manyem.
\newblock Bin packing and covering with longest items at the bottom: Online
  version.
\newblock {\em The ANZIAM Journal}, 43(E):E186--E232, 2002.

\bibitem{manyem2003approximation}
P.~Manyem, R.~L. Salt, and M.~S.Visser.
\newblock Approximation lower bounds in online {LIB} bin packing and covering.
\newblock {\em Journal of Automata, Languages and Combinatorics},
  8(4):663--674, 2003.

\bibitem{Seiden02J}
S.~S. Seiden.
\newblock {On the online bin packing problem}.
\newblock {\em Journal of the ACM}, 49(5):640--671, 2002.

\bibitem{U71}
J.~D. Ullman.
\newblock The performance of a memory allocation algorithm.
\newblock Technical Report 100, Princeton University, Princeton, NJ, 1971.

\bibitem{van1992improved}
A.~van Vliet.
\newblock An improved lower bound for on-line bin packing algorithms.
\newblock {\em Information Processing Letters}, 43(5):277--284, 1992.

\bibitem{Ves}
P.~Vesel\'y.
\newblock Competitiveness of fit algorithms for black and white packing.
\newblock Manuscript, presented in MATCOS-13, 2013.

\end{thebibliography}

\end{document}